\def\bs{\boldsymbol}
\def \foral {\textrm{for all }}
\newtheorem{proposition}{Proposition}
\theoremstyle{definition}
\newtheorem{definition}{Definition}
\title{Optimal Reverse Carpooling Over Wireless Networks - A Distributed Optimization Approach}
\author{Ali ParandehGheibi$^\dag$\thanks{$\dag$ Department of Electrical Engineering and Computer Science, MIT.},  Asuman Ozdaglar$^\dag$, Michelle Effros$^\ddag$\thanks{$\ddag$ Department of Electrical Engineering, California Institute of Technology.}, Muriel M\'edard$^\dag$\\
\normalsize{parandeh@mit.edu, asuman@mit.edu, effros@caltech.edu, medard@mit.edu}}
\begin{document}

\maketitle

\begin{abstract}
We focus on a particular form of network coding, reverse carpooling, in a wireless
network where the potentially coded transmitted messages are to be decoded immediately upon
reception. The network is fixed and known, and the system performance is measured in terms of the number of wireless broadcasts required to meet multiple unicast demands. Motivated by the structure of the coding scheme, we formulate the problem as a linear program by introducing a flow
variable for each triple of connected nodes. This allows us to have a formulation polynomial in the number of nodes.  Using dual decomposition and projected subgradient method, we present a
decentralized algorithm to obtain optimal routing schemes in presence of coding opportunities. We show that the primal sub-problem can be expressed as a shortest path problem on an \emph{edge-graph}, and the proposed algorithm requires each node to exchange information only with its neighbors.

Keywords: Network coding, Distributed optimization.
\end{abstract}

\section{Introduction}\label{introduction_sec}
The network coding literature has provided a new set of tools for designing wireless networks. The early works in this literature were mainly focused on the multicast problem, where all sinks in the network require all of the information transmitted at the source node. Linear network coding is sufficient to achieve the capacity of a network for a multicast session \cite{algebraic}. Another interesting problem is the multiple unicast problem, in which each source delivers its information to the corresponding destination. In this scenario, the coding schemes could be more complicated. In fact, linear network coding is not sufficient to achieve the capacity of a network with multiple unicast sessions \cite{insufficiency}. Nevertheless, simple linear network coding techniques can boost the performance of the network significantly. An example of a simple coding opportunity across multiple sessions is \emph{reverse carpooling} (see, for example \cite{tiling}). Consider the three-node wireless network configuration depicted in Figure \ref{carpool_fig}, where nodes A and B can only communicate via the relay node (R). The side nodes attempt to exchange messages $P_A$ and $P_B$. Routing schemes require two transmissions at the relay node to exchange the messages. However, by allowing network coding we can save one transmission as follows. The relay node first receives the packets $P_A$ and $P_B$ from A and B, respectively. Then it forms the bitwise XOR of the packets and broadcasts the \emph{coded} packet. Consequently, node A (B) can decode packet $P_B$ ($P_A$) by XORing the received coded packet with the packet $P_A$ ($P_B$) that it transmitted. This technique can be used to decrease the number of wireless transmissions (energy) when two unicast flows share a common path in opposite directions. Reverse carpooling is not the only form of opportunistic coding strategies. Katti \emph{et. al.} \cite{cope} present COPE, a practical scheme that exploits the overheard packets to introduce new coding opportunities. Experimental results in \cite{cope} demonstrate that COPE significantly increases the network throughput.
Note that the broadcast nature of the wireless medium is the essential ingredient for such coding opportunities.

\begin{figure}
\centering
  \includegraphics[width=.5\textwidth]{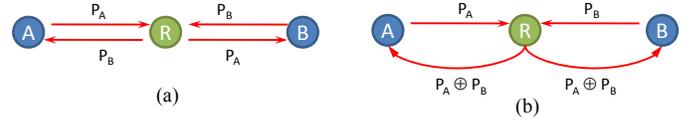}\\
  \caption{(a) Relay node requires two transmissions without network coding. (b) Relay node requires a single broadcast with network coding}\label{carpool_fig}
\end{figure}

In this work, we study the multiple unicast problem over a wireless network where only routing and  reverse carpooling are allowed. The goal is to minimize the total cost of transmissions (energy expenditure) to support the unicast sessions.

Given the decentralized nature of the information, we are interested in distributed algorithms that can implement the optimal policies using coordination only among neighboring nodes. Note that without coding, this problem is equivalent to finding the shortest path from each source to the corresponding destination, which can be solved efficiently in a distributed manner by storing a \emph{routing table} at each node and updating each table via local communication. Here, the routing decisions for each unicast session is independent of those for other sessions. In contrast, when reverse carpooling is allowed, the best path choice depends on the existence and rate of other unicast sessions. For example, consider the two unit-rate unicast sessions illustrated in Figure \ref{grid_unicast_fig}(a). In this example, deviation from the shortest path increases coding opportunities and hence decreases the total cost. Now, suppose the rate of session 2 is increased while the rate of session 1 is fixed. In this case, only session 1 deviates from its shortest path to reverse carpool with session 2. In this paper, using dual decomposition methods \cite{dist_book}, we design a pricing mechanism at each node that allows us to obtain optimal routing decisions. Moreover, we show that optimal prices (dual variables) and routing decisions (primal variables) can be computed efficiently by exchanging information solely with immediate neighbors.

\begin{figure}
\centering
  \includegraphics[width=.31\textwidth]{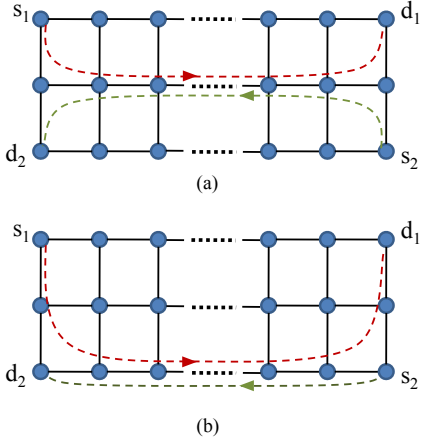}\\
  \caption{(a) Both unit-rate unicast sessions deviate from the shortest path (direct line) to maximize coding opportunities. (b) The optimal routing decisions change when rate of session 2 is increased.}\label{grid_unicast_fig}
\end{figure}

Related literature addresses the opportunistic coding problem from different points of view. Marden and Effros \cite{marden} use a game theoretic approach which has become popular in the recent literature for devising distributed control policies for networks. This entails viewing each source destination pair as a selfish agent and designing a utility function for each one. It is well-known that such a multi-agent model creates incentive issues and leads to loss in performance. The authors of \cite{marden} characterize suboptimality of the solutions reached by acting selfishly compared to the globally optimal solution. In contrast, we obtain the globally optimal solution using only limited coordination among the nodes. Also related to our work is Sengupta \emph{et al.} \cite{cope_optimization}, which take a centralized optimization approach and provide various linear optimization formulations for a network with restricted forms of network coding. The formulations presented in \cite{cope_optimization} are based on all paths between each source destination pair, which lead to exponential-sized problems. In another related work, Reddy \emph{et al.} \cite{srinivas} adopt a path-based formulation together with population game dynamics to achieve a close approximation of the optimal policies. Our problem formulation, in contrast, is polynomial in the number of nodes of the network, and hence we are able to obtain the globally optimal solution in polynomial time.
%

\section{System Model}\label{model_sec}
%

Consider a wireless network with $n$ nodes modeled as an undirected graph $G = (\mathcal N, \mathcal E)$, where $\mathcal N= \{1, \ldots, n\}$, and $\{i,j\} \in \mathcal E$ if and only if  $i$ and $j$ can directly communicate to each other. We refer to such nodes as \emph{neighbors}. We assume that the broadcast
constraint is present, i.e, each transmission by a particular node is received by all of its
neighbors. We also assume that  the interference at the receivers is avoided by a Medium Access Control (MAC) protocol such as CSMA.

We study a multiple unicast scenario where there is a set $T$ of unicast flows. Each unicast
flow $t \in T$ corresponds to a source-destination pair $(s_t, d_t)$, and a demand $R_t$ which is
the rate at which packets from source $s_t$ should be delivered to destination $d_t$ reliably. Throughout this work we assume the set of unicast connections are feasible, i.e., there exists at least one path connecting each source to its destination, and no capacity constraint exists.

For each node $i$, the cost of broadcasting a unit-rate flow (one packet per unit time) is denoted by $c_i$. This could reflect the energy expenditure or the level of contention at a particular node. Throughout this work, we assume that only routing and reverse carpooling coding operations (as described in Section \ref{introduction_sec}) are allowed. The goal is to minimize the cost required to support all unicast connections in steady state. Next, we present a linear programming formulation for this problem.

\section{Problem Formulation}\label{formulation_sec}
In this section, we discuss how to formulate the problem of coding-aware routing for multiple
unicast sessions as a linear optimization problem. In general, there are two approaches to formulate a multiple-unicast problem which is also referred to as \emph{multi-commodity flow problem} in the network optimization literature: The \emph{path-based} approach and the \emph{link-based} approach. The path-based approach considers all possible paths connecting each source-destination pair. This approach in general leads to a formulation of exponential size. Moreover, this approach is not suitable for distributing the decision making among individual wireless nodes. On the other hand, the link-based approach is more appealing from decentralized optimization point of view as well as the problem size. However, when reverse carpooling is allowed, there is no simple way of writing the transmission cost of a particular node in terms of the flow variables on its incoming and outgoing links. That is because in the case of reverse carpooling there is no one-to-one correspondence between packet transfers and physical transmissions.

In order to have a finer description of the flows, we define the decision variables to describe the flow of packets over two successive hops. Let $i$ be a node with neighbors $v$ and $w$. Denote by $x^t(v,i,w)$ the flow of packets of the session $t \in T$, transferred from $v$ to $w$ via node $i$. When reverse carpooling is allowed, node $i$ can transfer $\sum_{t \in T} x^t(v,i,w)$ packets from $v$ to $w$, and $\sum_{t \in T} x^t(w,i,v)$ packets from $w$ to $v$ with
$$y^i_{vw} = y^i_{wv} =  \max \Big\{ \sum_{t \in T}x^t(v,i,w) , \sum_{t \in T}x^t(w,i,v)\Big\} $$
transmissions. Since all reverse carpooling opportunities are along such triples of nodes, the total number of transmissions of a particular relay node $i$ of graph $G$ is
\begin{equation}\label{tx_count}
z_i = \sum_{v<w: (v,i,w) \in \Gamma_G} y^i_{vw},
\end{equation}
where $\Gamma_G = \Big\{(i,j,k): \{i,j\} \in \mathcal E, \{j,k\} \in \mathcal E, i \neq k \Big\}$.

Note that every node can act as a relay, source or destination node. In order to write the constraints in a compact way, we define an \emph{expanded graph} in which we introduce an artificial source and destination node for each unicast session.

\begin{definition}
Given the network connectivity graph $G = (\mathcal N, \mathcal E)$, and a set of unicast connections described as $\big\{(s_t, d_t, R_t): t \in T\big\}$, the \emph{expanded graph} is an undirected graph $G' = (\mathcal N', \mathcal E')$ where
$$\mathcal N' = \mathcal N \cup \bigg(\bigcup_{t \in T} \{s'_t, d'_t\}\bigg),$$
$$\mathcal E' = \mathcal E \cup \bigg(\bigcup_{t \in T} \Big\{ \{s'_t, s_t\}, \{d'_t, d_t\}\Big\}   \bigg),$$
and the set of unicast connections is given by $\big\{(s'_t, d'_t, R_t): t \in T\big\}$.
\end{definition}

\begin{figure}
\centering
\includegraphics[width=.5\textwidth]{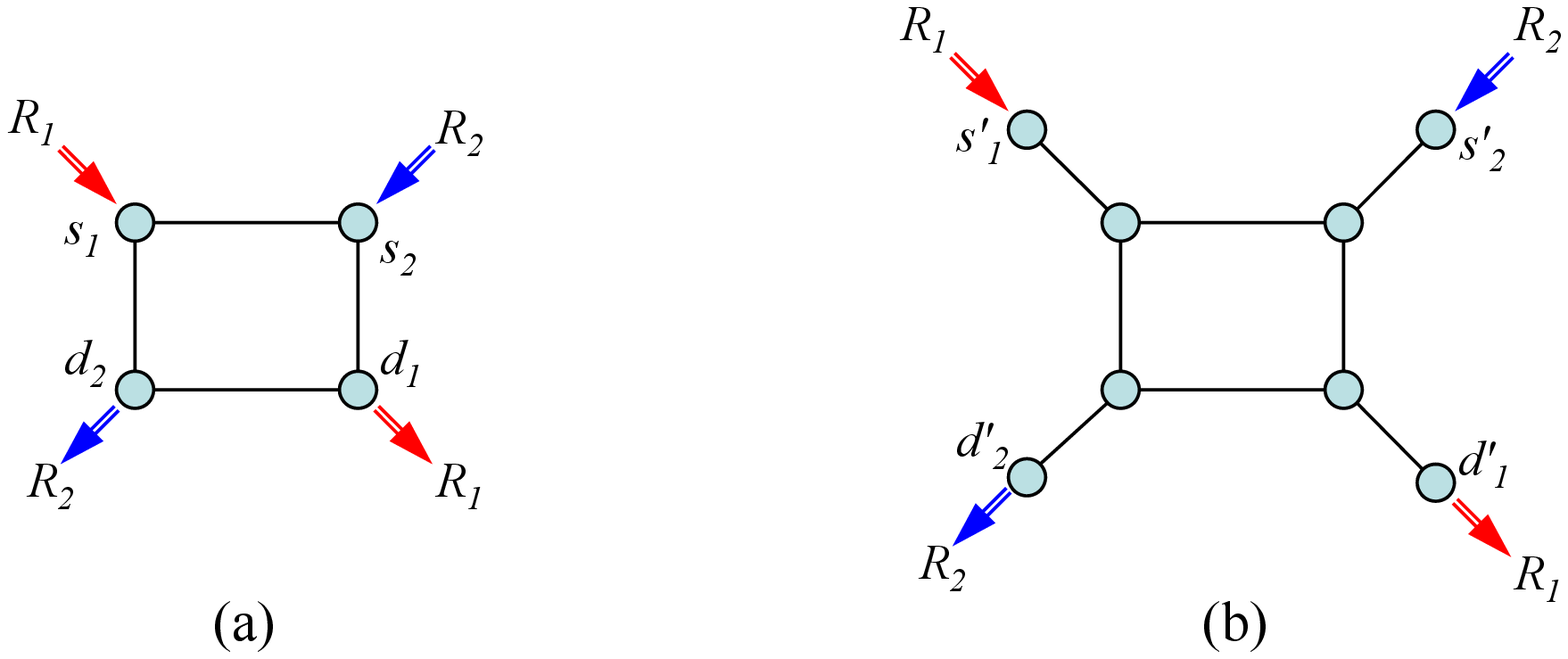}\\
      \caption{(a) A graph $G = (\mathcal N, \mathcal E)$ with two source-destination pairs (b) Expanded graph $G' = (\mathcal N', \mathcal E')$ with artificial nodes added for each source and destination.}\label{fig:artificial}
\end{figure}

Figure \ref{fig:artificial} illustrates a four node graph with two unicast connections (a) and its corresponding expanded graph (b). Every node of the expanded graph is either a source, a destination or a relay node. Moreover, all source and destination nodes of the expanded graph have degree one. This allows us to specify the conservation of flow constraints as follows:
for every $t \in T$ and every ordered pair $(i,j)$ with $\{i,j\} \in \mathcal E'$, write
\begin{eqnarray}\label{conservation_const}
   \sum_{w: (i,j,w) \in \Gamma_{G'}} x^t(i,j,w) - \sum_{v: (v,i,j) \in \Gamma_{G'}} x^t(v,i,j) = \sigma^t(i,j), &&
\end{eqnarray}
where
\begin{equation}\label{Gamma}
     \Gamma_{G'} = \Big\{(i,j,k): \{i,j\} \in \mathcal E', \{j,k\} \in \mathcal E', i \neq k \Big\}
\end{equation}
is the set of all triples of connected nodes on the expanded graph $G'$, and
 $$ \sigma^t(i,j) = \left\{
                      \begin{array}{ll}
                        R_t, & \hbox{if $i=s'_t$;} \\
                        -R_t, & \hbox{if $j = d'_t$;} \\
                        0, & \hbox{otherwise.}
                      \end{array}
                     \right.  $$

The above constraints confirm that for any unicast session $t$, and all $\{i, j\} \in \mathcal E$, in each unit time, the difference between the number of packets that travel through node $i$ to get to node $j$ and the number of packets that travel from node $i$ through node $j$ to get to some other node equals either the number of packets that originate at node $i$ if $i$ is the source of unicast $t$ or the number of packets demanded by node $j$ if $j$ is the destination of unicast $t$ or zero otherwise (see Figure \ref{fig:conservation}). For simplicity of notation, denote by $\Lambda^{(t)}$ the set of all non-negative vectors $\bs x^t$ satisfying the above conservation of flow constraints.

In order to correctly map the transmission cost on the expanded graph to that of the original graph note that each physical transmission of a particular node of the graph $G$ maps to one transmission of the same node on the expanded graph $G'$. However, transmissions of the packets in the expanded graph that result in delivery of those packets to a destination do not correspond to any physical transmission in the original graph. By subtracting the cost of the extra transmissions to the destinations of the expanded graph, one can write the total transmission cost as
    $$\sum_{i \in \mathcal N} c_i z_i - \sum_{t \in T}c_{d_t}R_t,$$
where $z_i$ is the number of physical transmissions of node $i$ on the expanded graph (cf. (\ref{tx_count})). Since, an additive constant does not change the optimal solution, we neglect the cost of retransmission at the destinations.

\begin{figure}
\centering
  \includegraphics[width=.3\textwidth]{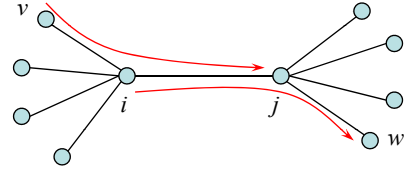}\\
  \caption{Conservation of flow constraints. The arrows demonstrate the direction of the flow of packets.}\label{fig:conservation}
\end{figure}

We can now express the problem of minimizing the total transmission cost for a multiple unicast problem with reverse carpooling as the following linear program:

\begin{eqnarray}    \label{LP}
   \textrm{minimize} \quad \sum_{i \in \mathcal N} \sum_{v<w: (v,i,w) \in \Gamma_{G'}} && \!\!\!\!\!\!\!\!\!\!\!\! c_i y^i_{vw} \\
    \textrm{subject to} \qquad \qquad \qquad \qquad && \nonumber \\
    \sum_{t \in T}x^t(v,i,w)  \le y^i_{vw}, &&\  \foral (v,i,w) \in \Gamma_{G'}, \nonumber \\
    y^i_{vw} = y^i_{wv}, && \ \foral (v,i,w) \in \Gamma_{G'}, \nonumber \\
    \bs x^t \in \Lambda^{(t)}, && \ \foral t \in T, \nonumber
\end{eqnarray}
where $\Gamma_{G'}$ is defined in (\ref{Gamma}), and $\Lambda^{(t)}$ is the set of non-negative vectors $\bs x^t$ satisfying the conservation of flow constraints (\ref{conservation_const}).
The preceding formulation of the problem is a linear optimization problem of polynomial size which can be solved efficiently in a centralized manner. However, we are interested in a distributed algorithm to solve this problem. In the following section, we use a dual decomposition together with subgradient method to solve this problem in a decentralized manner.

\section{Dual Decomposition}\label{decomposition_sec}
We consider the following Lagrangian function that penalizes the first set of constraints of the problem in (\ref{LP}). For each triple $(v,i,w) \in \Gamma_{G'}$, the Lagrange multiplier corresponding to the constraint for $(v,i,w)$ is denoted by $p(v,i,w)$.
\begin{eqnarray*}\label{lagrange}
    L(\bs x,\bs y,\bs p) &=&  \sum_{i \in \mathcal N} \sum_{v<w: (v,i,w) \in \Gamma_{G'}} \bigg[ c_i y^i_{vw} \\
    &+& p(v,i,w) \Big( \sum_{t \in T}x^t(v,i,w)-y^i_{vw}\Big)  \\
        &+& p(w,i,v) \Big( \sum_{t \in T}x^t(w,i,v)-y^i_{wv}\Big)  \bigg].
\end{eqnarray*}

The dual function is thus given by

\begin{eqnarray}\label{qp}
&& q(\bs p) =   \textrm{min} \quad  \sum_{t \in T}\sum_{(v,i,w) \in \Gamma_{G'}} p(v,i,w) x^t(v,i,w)    \\
&& + \sum_{i \in \mathcal N}\sum_{v<w, (v,i,w) \in \Gamma_{G'}} y^i_{vw}\big(c_i - p(v,i,w) - p(w,i,v)\big)  \nonumber \\
&&  \qquad  \textrm{subject to}   \nonumber \\
&& \qquad \qquad \qquad   y^i_{vw} = y^i_{wv}, \qquad \ \foral (v,i,w) \in \Gamma_{G'},  \nonumber \\
&& \qquad \qquad  \qquad   \bs  x^t \in \Lambda^{(t)}, \qquad \quad \foral t \in T. \nonumber
\end{eqnarray}

Since $y^i_{vw}$ can be positive or negative, we can drive the cost to $-\infty$ when $c_i - p(v,i,w) - p(w,i,v)$ is non-zero for any triple $(v,i,w)$. Hence, the dual function can be simplified as
\begin{equation*}\label{dual}
    q(\bs p) = \left\{
             \begin{array}{ll}
               q^*(\bs p), & \hbox{if $(c_i - p(v,i,w) - p(w,i,v)) = 0,$} \\
                 & \hbox{ \qquad \qquad \qquad$\foral (v,i,w) \in  \Gamma_{G'}$;} \\
               -\infty, & \hbox{otherwise,}
             \end{array}
           \right.
\end{equation*}
where $q^*(\bs p)$ is the optimal value of the \emph{primal sub-problem} given by
\begin{eqnarray}
 q^*(\bs p) =   \textrm{min} \quad \sum_{t \in T} \sum_{(v,i,w) \in \Gamma_{G'}} p(v,i,w) x^t(v,i,w) && \label{primal_problem} \\
   \textrm{subject to} \qquad \qquad \qquad \qquad \qquad \qquad&&  \nonumber \\
 \qquad \qquad     \bs x^t \in \Lambda^{(t)}. \qquad \foral t \in T, &&\nonumber
\end{eqnarray}

The \emph{dual problem} solves the maximization
\begin{eqnarray}
 \textrm{maximize} \quad q^*(\bs p)  \hspace{15em} &&  \label{dual_problem} \\
\textrm{subject to}   \hspace{18em}   &&  \nonumber \\
   p(v,i,w) + p(w,i,v) = c_i, \quad \foral (v,i,w) \in \Gamma_{G'},    &&\nonumber \\
  p(v,i,w) \geq 0, \quad \foral (v,i,w) \in \Gamma_{G'}.    && \nonumber
\end{eqnarray}

Note that for every feasible vector $\bs p$, $q^*(\bs p)$ provides a lower bound on the optimal cost of the problem in (\ref{LP}). Since there is no duality gap for feasible linear optimization problems (see \cite{nlp_book}), optimizing the dual problem achieves the optimal value of the original problem (\ref{LP}).

We next show how to solve the dual problem given an optimal solution to the primal sub-problem and how to decompose and solve the primal sub-problem efficiently.

\subsection{Dual Problem}
We employ the projected subgradient optimization method (cf. Section 6.3.1 of \cite{nlp_book}) to solve the dual problem given by (\ref{dual_problem}). We start from a dual feasible solution
$\bs p[0]$. Given the $n^{th}$ iteration $\bs p[n]$ for any $n\geq 0$, we compute a subgradient of $q(\bs p)$ at $\bs p[n]$, which is given by
 $$ \bs g[n] = \sum_{t \in T} \hat{\bs x}^t[n] - \hat{\bs y}[n],$$
where  $(\hat {\bs x}[n], \hat {\bs y}[n])$ is an optimal solution of the problem in (\ref{qp}) for the dual vector $\bs p = \bs p[n]$. Note that for a dual feasible solution, the values of the  $y^{i}_{vw}$ do not affect the cost or feasibility of problem (\ref{qp}). Hence, they can be selected arbitrarily, and we choose them to be zero. Therefore, $\hat {\bs x}[n]$ can be computed by solving the primal sub-problem (\ref{primal_problem}). The $(n+1)^{st}$ iterate of the subgradient method is given by
\begin{equation}\label{subgradient}
     \bs p[n+1] = \Big(\bs p[n] +  \alpha^n \sum_{t \in T} \hat{\bs x}^t[n] \Big)^+,
\end{equation}
where $\alpha^n$ is the stepsize, and $(\cdot)^+$ denotes Euclidean projection on the set of feasible solutions of the dual problem, i.e., $\bs z^+$ is the optimal solution of the following problem
\begin{eqnarray}\label{proj_problem}
 \textrm{min} \quad \sum_{v<w, (v,i,w) \in \Gamma_{G'}} (z(v,i,w) - p(v,i,w))^2 \hspace{1em} && \\+ (z(w,i,v) - p(w,i,v))^2 \hspace{1em} &&  \nonumber \\
\textrm{subject to}   \hspace{16em}   &&  \nonumber \\
   p(v,i,w) + p(w,i,v) = c_i, \quad \foral (v,i,w) \in \Gamma_{G'},    &&\nonumber \\
  p(v,i,w) \geq 0, \quad \foral (v,i,w) \in \Gamma_{G'}.    && \nonumber
\end{eqnarray}

      We can decompose the above projection problem into smaller sub-problems of projection on a two-dimensional simplex, which can be expressed in closed-form (see Figure \ref{projection_fig} for an illustration of the process). After a few straightforward steps, we can verify that for all triples $(v,i,w) \in \Gamma_{G'}$
\begin{eqnarray}\label{subgrad_projected}
    p(v,i,w)[n+1] = \Big[p(v,i,w)[n] \hspace{8em} && \nonumber \\
    + \frac{\alpha^n}{2} \sum_{t \in T} \Big(\hat x^t(v,i,w)[n] - \hat x^t(w,i,v)[n]\Big)\Big]^+_{[0,c_i]}, &&
\end{eqnarray}
where $[z]^+_{[a,b]}$ denotes projection of $z$ into interval $[a,b]$, i.e.,
$$[z]^+_{[a,b]} = \left\{
                    \begin{array}{ll}
                      a, & \hbox{if $z <a$} \\
                      z, & \hbox{if $a \le z \le b$} \\
                      b, & \hbox{if $z>b$.}
                    \end{array}
                  \right.
$$

\begin{figure}
\centering
  \includegraphics[width=.35\textwidth]{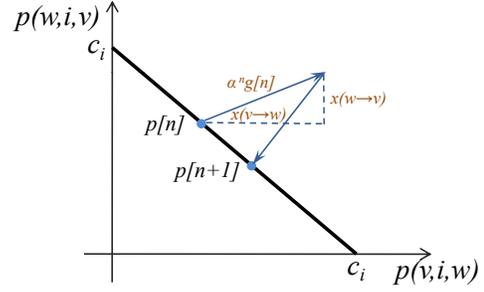}\\
  \caption{An iteration of the projected subgradient method. $x(v\rightarrow w) =  \alpha^n \sum_{t \in T} \hat{x}^t(v,i,w)[n]$ is proportional to the total flow relayed from $v$ to $w$ through $i$. Observe that when $x(v\rightarrow w) > x(w\rightarrow v)$ the price of forwarding packets from $w$ to $v$, $p(w,i,v)$, decreases to attract more flow in this direction.}\label{projection_fig}
\end{figure}

Under proper stepsize selection rules, such as the diminishing stepsize selection rule $\alpha^n = \frac1n$, the iterates given in (\ref{subgrad_projected}) converge to the dual optimal solution $\bs p^*$ as $n \rightarrow \infty$ \cite{nlp_book}. Even though the optimal solution of the primal sub-problem (\ref{primal_problem}) at $\bs p^*$ gives the optimal solution of the main problem (\ref{LP}), there is no guarantee that the corresponding primal solutions $\hat{\bs x}[n]$ approach the optimal solution of the main problem (\ref{LP}) as $\bs p[n]$ approach $\bs p^*$. However, there are several methods suggested by Sherali and Choi \cite{primal_recovery} to recover close to optimal primal solutions from the sequence of primal iterates $\{\hat{\bs x}[n]\}$. For instance, consider the sequence $\{\tilde{\bs x}[n]\}$, where
\begin{equation}\label{recovery}
\tilde{\bs x}[n] = \frac1n \sum_{k=1}^n \hat{\bs x}[k].
\end{equation}
For proper stepsize selection rules, any limit point of the sequence $\{\tilde{\bs x}[n]\}$ gives an optimal solution to the main problem (\ref{LP}) by \cite{primal_recovery}.

\subsection{Primal Sub-Problem}
We proceed to the primal sub-problem described in (\ref{primal_problem}). Since the cost function is separable and there is no constraint binding the flow variables $x^t(v,i,w)$ for different $t \in T$, we can decompose this problem into separate sub-problems. Write $q^*(\bs p) = \sum_{t \in T} q^t(\bs p)$, where for each $t \in T$

\begin{eqnarray}
 q^t(\bs p) =   \textrm{min} \quad \sum_{(v,i,w) \in \Gamma_{G'}} p(v,i,w) x^t(v,i,w) \hspace{4em}&& \label{t_subproblem} \\
   \textrm{subject to} \hspace{18em} &&  \nonumber \\
      \sum_{w: (i,j,w) \in \Gamma_{G'}} x^t(i,j,w) - \sum_{v: (v,i,j) \in \Gamma_{G'}} x^t(v,i,j) = \sigma^t(i,j), && \nonumber \\
      \foral (i,j) \textrm{ s.t. } \{i,j\} \in \mathcal E',\quad && \nonumber \\
      x^t(v,i,w) \geq 0, \quad \foral (v,i,w) \in \Gamma_{G'}.\qquad \quad \ && \nonumber
\end{eqnarray}

Each such problem is a minimum cost flow problem with no capacity constraint.  We show that this problem can be expressed as a shortest path problem  on an \emph{edge-graph} defined below.

\begin{definition}\label{hypergraph_def}
The \emph{edge-graph} $H = (\mathcal V, \mathcal A)$ for the undirected graph $G = (\mathcal N, \mathcal E)$ is defined as a weighted, directed graph where
\begin{eqnarray}
\mathcal V &=& \bigg\{ (i,j): i,j \in \mathcal N', \textrm{ and } \{i,j\} \in \mathcal E' \bigg\},
\nonumber \\
\mathcal A &=& \bigg\{ \Big((i,j),(j,k)\Big): (i,j,k) \in  \Gamma_{G'} \bigg\}. \nonumber
\end{eqnarray}
In other words, there is a directed link from node $(i,j) \in \mathcal V$  to $(k,l) \in \mathcal V$ if and only if $j=k$, and $i \neq l$.
\end{definition}

An example of an expanded graph with one source-destination pair and the corresponding edge-graph is illustrated in Figure \ref{fig:hypergraph}. Next, we establish the equivalence relation.

\begin{proposition}
For each $t \in T$, the problem in (\ref{t_subproblem}) is equivalent to the shortest path problem between nodes $(s'_t, s_t)$ and $(d'_t, d_t)$ of the edge-graph $H$ defined in Definition \ref{hypergraph_def}, when the cost of each link $ \Big((i,j),(j,k)\Big) \in \mathcal A$ is given by $p(i,j,k)$.
\end{proposition}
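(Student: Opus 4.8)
The plan is to exhibit an explicit correspondence between feasible flows $\bs x^t$ of the minimum cost flow problem in (\ref{t_subproblem}) and unit flows (paths) on the edge-graph $H$, under which the two objective functions coincide. The key observation is that the decision variables $x^t(v,i,w)$ are naturally indexed by pairs of adjacent edges of $G'$, which is exactly the index set of the arcs $\mathcal A$ of $H$: a triple $(v,i,w) \in \Gamma_{G'}$ corresponds to the arc $\big((v,i),(i,w)\big)$, and the link cost $p(i,j,k)$ assigned to the arc $\big((i,j),(j,k)\big)$ matches the coefficient of $x^t(i,j,w)$ in the objective of (\ref{t_subproblem}) verbatim. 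So the two objectives agree term by term once we identify a flow variable with the flow on its corresponding arc of $H$.

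The main work is to verify that the conservation constraints (\ref{conservation_const}) on $G'$ translate precisely into flow-conservation constraints at each node $(i,j) \in \mathcal V$ of $H$, with a unit of supply at $(s'_t, s_t)$ and a unit of demand at $(d'_t, d_t)$ (scaled by $R_t$). First I would rewrite (\ref{conservation_const}) for a fixed ordered pair $(i,j)$: the first sum $\sum_{w} x^t(i,j,w)$ is the total flow leaving node $(i,j)$ along outgoing arcs of $H$, and the second sum $\sum_{v} x^t(v,i,j)$ is the total flow entering node $(i,j)$ along incoming arcs. Thus (\ref{conservation_const}) reads exactly as ``outflow minus inflow equals $\sigma^t(i,j)$'' at the $H$-node $(i,j)$. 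I would then check that the source term $\sigma^t(i,j) = R_t$ fires only at the pair $(i,j) = (s'_t, s_t)$ and $\sigma^t(i,j) = -R_t$ only at $(i,j) = (d'_t, d_t)$; this uses the structural fact, noted in the paper, that the artificial nodes $s'_t$ and $d'_t$ have degree one in $G'$, so the only edge incident to $s'_t$ is $\{s'_t, s_t\}$ and likewise for $d'_t$. Hence the supply/demand is injected at a single well-defined node of $H$, and all other $H$-nodes are transshipment nodes.

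Having matched both the objective and the constraints, the problem (\ref{t_subproblem}) is literally a minimum-cost flow problem on $H$ sending $R_t$ units from $(s'_t, s_t)$ to $(d'_t, d_t)$ with nonnegative arc flows, nonnegative arc costs $p$, and no capacity constraints. The final step is the standard reduction: since costs are nonnegative and there are no capacities, an optimal flow decomposes into shortest paths, so the minimum cost of routing $R_t$ units equals $R_t$ times the shortest-path distance from $(s'_t, s_t)$ to $(d'_t, d_t)$, and conversely any shortest path yields an optimal flow. I would invoke the flow-decomposition theorem and nonnegativity of $p$ (guaranteed by dual feasibility in (\ref{dual_problem})) to justify concentrating all flow on a single shortest path.

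The step I expect to require the most care is the bookkeeping in the constraint translation, specifically confirming that the degree-one property of the artificial nodes makes the source and sink of the $H$-flow unique and that no spurious conservation node is created; the objective matching and the shortest-path reduction are then routine. A minor subtlety worth stating explicitly is that the edge-graph nodes are \emph{ordered} pairs $(i,j)$, so the orientation of each edge of $G'$ is tracked correctly and the ``$i \neq k$'' condition in $\Gamma_{G'}$ (equivalently $i \neq l$ in Definition \ref{hypergraph_def}) prevents immediate back-tracking, which is exactly what rules out the degenerate path that would reverse along the same physical edge.
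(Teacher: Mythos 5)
Your proposal is correct and follows essentially the same route as the paper: identify each triple $(v,i,w) \in \Gamma_{G'}$ with the arc $\big((v,i),(i,w)\big)$ of $H$, match the objective coefficients with the link costs $p$, recognize the conservation constraints (\ref{conservation_const}) as flow conservation at the nodes of $H$ with supply/demand at $(s'_t,s_t)$ and $(d'_t,d_t)$, and rescale by $R_t$. The only difference is that you spell out details the paper leaves implicit, namely the flow-decomposition argument with nonnegative costs that makes the min-cost flow LP equivalent to a shortest path, and the role of the degree-one artificial nodes in pinning down the source and sink.
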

\begin{proof}
Consider the problem of finding the shortest path between nodes $p$ and $q$ of a directed graph $(\mathcal V, \mathcal A)$, when the cost of each link $(i,j) \in \mathcal A$ is denoted by $w_{ij}$. A linear programming formulation of this problem is
\begin{eqnarray}
 \textrm{minimize} \quad \sum_{(i,j) \in \mathcal A} w_{ij} x_{ij} \hspace{12em}&& \label{SP1} \\
   \textrm{subject to} \hspace{18em} &&  \nonumber \\
      \sum_{j: (i,j) \in \mathcal A} x_{ij} - \sum_{j: (j,i) \in \mathcal A} x_{ji} =  \sigma_i \quad \foral i \in \mathcal V, && \nonumber \\
      x_{ij} \geq 0, \quad \foral (i,j) \in \mathcal A, && \nonumber
\end{eqnarray}
where
$$\sigma_i = \left\{
  \begin{array}{ll}
    1, & \hbox{if $i = p$ ;} \\
    -1, & \hbox{if $i = q$;} \\
    0, & \hbox{otherwise.}
  \end{array}
\right.
$$

In order to see the equivalence of this problem to the problem in (\ref{t_subproblem}), it is sufficient to use the construction of the edge-graph $H$ given by Definition \ref{hypergraph_def} with $p = (s'_t, s_t)$ and $q = (d'_t, d_t)$. Note that there is a one to one correspondence
between triples $(v,i,w) \in \Gamma_{G'}$, and directed links $\big((v,i),(i,w)\big) \in \mathcal
A$, and the weight of each such triple coincides with the cost of the related link. We also need to scale the variables by $R_t$ to obtain the exact same formulation as in (\ref{t_subproblem}), but this does not affect the equivalence.
\end{proof}

       \begin{figure}
       \centering
         \includegraphics[width=.45\textwidth]{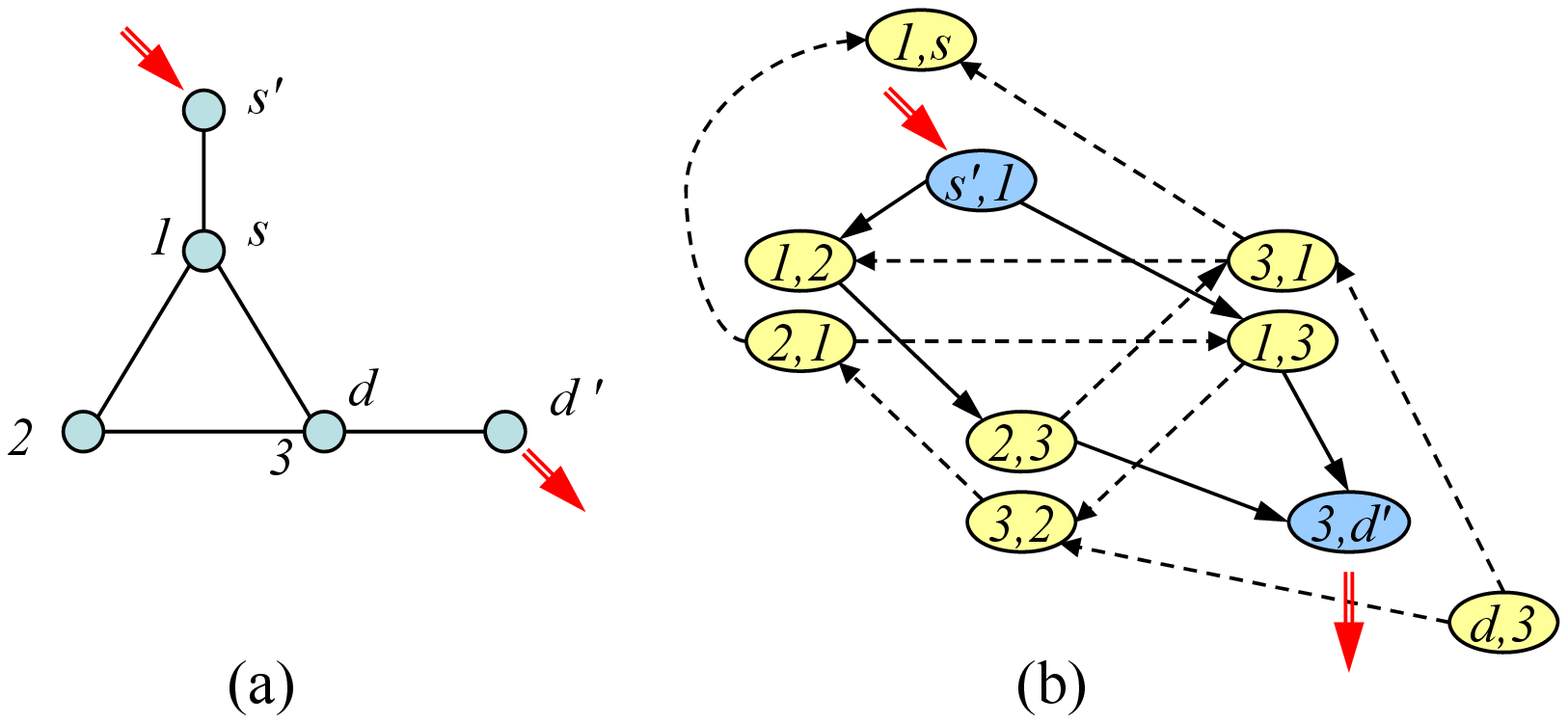}\\
         \caption{(a) Expanded graph with source node $s'$, and destination node $d'$. (b)
         Corresponding edge-graph generated as described in (\ref{hypergraph_def}) with  $(s',1)$ as the source and $(3,d')$ as the destination. The solid links can participate in a path
         from the source to the destination.   }\label{fig:hypergraph}
       \end{figure}

There are various decentralized and asynchronous implementations of the shortest path problem
available in the literature such as the well-known Bellman-Ford and Dijkstra's algorithms, (see \cite{dist_book}).
However, distributed computation of the shortest path on the edge-graph does not immediately yield
a decentralized algorithm on the original network of wireless nodes. We discuss distributed implementation of the presented algorithms on the original graph in the following section.

\section{Decentralized Implementation}\label{distributed_sec}
We provided a linear programming formulation of a minimum cost coding-aware routing problem for multiple unicast sessions. We presented an iterative algorithm based on the projected subgradient method to solve this problem. Each iteration of the algorithm is computed as in (\ref{subgrad_projected}) given an optimal solution of the primal sub-problem in (\ref{primal_problem}). We now discuss how to implement this iterative algorithm when nodes can only exchange information with their neighbors.

The dual (price) variables $p(v,i,w)$ as well as all primal (flow) variables of the form $x^t(v,i,w)$ are stored at node $i$. Given an optimal solution to the primal sub-problem, the subgradient iteration given in (\ref{subgrad_projected}) can be readily computed at node $i$ without any further coordination with other nodes.

Now consider the primal sub-problem (\ref{primal_problem}). As we discussed earlier, the primal sub-problem is equivalent to a shortest path problem on the edge-graph described in Definition \ref{hypergraph_def} for each source-destination pair. The shortest path problem on the edge-graph can be solved by the Bellman-Ford algorithm in a decentralized and asynchronous manner (cf. \cite{dist_book}). In this algorithm, every node $(i,w) \in \mathcal V$ updates its label based on the information obtained from its predecessor $(v,i) \in \mathcal V$, as well as the cost of the link between them given by $p(v,i,w)$.

For every node $(i,j)$ of the edge-graph $H$, assign its processor to node $i$ of the expanded graph $G'$, i.e., the distance information kept at node $(i,j)$ of the edge-graph is stored at node $i$. Hence, information exchange between nodes $(i,w)$ and $(v,i)$ in the edge-graph is equivalent to information exchange between nodes $i$ and $v$ in the expanded graph. This is possible because nodes $i$ and $v$ of the expanded graph are connected by definition of $\mathcal V$ in Definition \ref{hypergraph_def}. Therefore, the primal sub-problem (\ref{primal_problem}) can be solved on the expanded graph by exchanging information only with immediate neighbors. This immediately yields a distributed algorithm on the original graph $G$ by performing the computations assigned to each source (destination) node of graph $G'$ at the corresponding source (destination) node of the original graph $G$.

       \begin{figure}
       \centering
         \includegraphics[width=.35\textwidth]{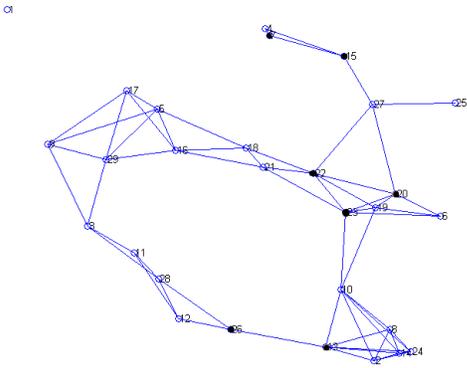}\\
         \caption{Network connectivity graph.}\label{connectivity_fig}
       \end{figure}

We can interpret the operation of the algorithm as follows. At each round, each node $i$ declares a price $p(v,i,w)$ for forwarding unit flow from $v$ to $w$, and a different price $p(w,i,v)$ for the opposite direction. Given these prices, the nodes obtain the shortest path from each source to the corresponding destination. This gives the updated flow variables, i.e., how much flow $i$ needs to relay from $v$ to $w$ and how much is relayed from $w$ to $v$. Finally, node $i$ updates the prices so that the price of forwarding packets in the direction with less flow decreases (see Figure \ref{projection_fig}). This provides an incentive for other flows to exploit more coding opportunities.

\section{Simulation Results}\label{simulation_sec}
We demonstrate the convergence of the presented iterative algorithm for solving the minimum cost coding-aware routing problem (\ref{LP}). The connectivity graph $G$ is generated by randomly placing nodes in a square according to a Poisson point process of unit rate. Two nodes are assumed  to be connected if and only if their distance is less than one. Figure \ref{connectivity_fig} illustrates a sample graph with the following source-destination pairs: $(s_1, d_1) = (20, 13); (s_2, d_2) = (26, 7); (s_3, d_3) = (15, 23); (s_4, d_4) = (7, 22)$. We assume that each pair has unit rate, and the cost of each transmission is also one.

We start the subgradient iterations in (\ref{subgrad_projected}) with the initial dual variables given by $p(v,i,w) = \frac{c_i}{2}$. We take a diminishing stepsize selection rule $\alpha^n = \frac{1}{n}$ and recover the primal feasible solutions $\tilde{\bs x}[n]$ as in (\ref{recovery}). Figure \ref{plot_fig} shows the total system cost evaluated at $\tilde{\bs x}[n]$ for each iteration of the subgradient method. The optimal value of the primal sub-problem (\ref{primal_problem}) at any dual feasible solution provides a lower bound on the optimal cost which is plotted in Figure \ref{plot_fig}.
The total transmission cost for a plain routing scheme is also plotted as a reference. Note that after a few iterations of the subgradient method the transmission cost of the system with reverse carpooling significantly improves compared to that of plain routing schemes.

     \begin{figure}
       \centering
         \includegraphics[width=.45\textwidth]{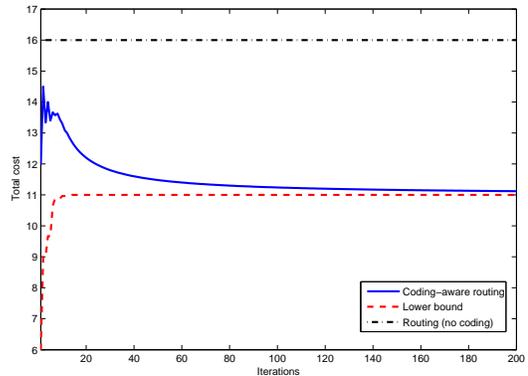}\\
         \caption{Convergence of the iterations of the subgradient method.}\label{plot_fig}
       \end{figure}

\section{Conclusions}\label{conclusion_sec}
We studied the minimum cost multiple unicast problem over a wireless network when coding opportunities of the form of reverse carpooling exist. We formulated the problem as a linear optimization problem of polynomial size by considering flow variables across triples of connected nodes. By employing dual decomposition and the projected subgradient method, we provided an iterative algorithm to obtain the optimal flow control strategies. We also showed that the presented algorithm can be implemented in a distributed manner so that each node only requires to exchange information with its neighbors.

The problem formulation in this work as well as the ones in \cite{cope_optimization} and \cite{srinivas} are based on maximizing coding opportunities to decrease the total transmission cost. However, this can cause a severe congestion in a certain area of the network, while there are other unused paths that can provide similar coding opportunities. In such scenarios, it is more desirable to distribute the flow evenly across multiple paths to avoid congestion. We shall study this problem in  the future works.

\end{document}